\newtheorem{remark}{Remark}
\newtheorem{assumption}{Assumption}
\newtheorem{thm}{\textbf{Theorem}}
\title{\LARGE \bf
Robust multi-rate predictive control using multi-step prediction models learned from data
}
\author{Enrico Terzi$^{1}$, Marcello Farina$^{1}$, Lorenzo Fagiano$^{1}$, and Riccardo Scattolini$^{1}$
\thanks{$^{1}$ The authors are with the Dipartimento di Elettronica, Informazione e Bioingegneria, Politecnico di Milano, Via Ponzio 34/5, 20133, Milano, Italy. E-mail: {\tt\small name.surname@polimi.it}}}
\begin{document}

\maketitle
\thispagestyle{empty}
\pagestyle{empty}

\begin{abstract}
This note extends a recently proposed algorithm for model identification and robust model predictive control (MPC) of asymptotically stable, linear time-invariant systems subject to process and measurement disturbances. Independent output predictors for different horizon values are estimated with Set Membership methods. It is shown that the corresponding prediction error bounds are the least conservative in the considered model class. Then, a new multi-rate robust MPC algorithm is developed, employing said multi-step predictors to robustly enforce constraints and stability against disturbances and model uncertainty, and to reduce conservativeness. A simulation example illustrates the effectiveness of the approach.
\end{abstract}

\section{Introduction}\label{Intro}

In a recent paper \cite{TFFS_aut}, we presented a unitary approach to model identification and robust Model Predictive Control (MPC) design for linear, asymptotically stable, discrete time systems subject to process and measurement disturbances. A Set Membership (SM) identification approach was used to obtain multi-step prediction models used in the cost function definition, while state and control constraints were tightened by propagating the uncertainty bound of a simulation model, tuned using the knowledge of the multi-step models and the associated error intervals. Being the multi-step predictors linear in their parameters, it was possible to derive tight uncertainty bounds in a tractable way. However, these bounds were not directly exploited to deal robustly with constraints, with a consequent limited advantage in terms of conservativeness reduction in the constraint tightening procedure.\\
In the present paper, we develop this line of research with two main contributions: first, we prove that the prediction error bounds obtained with the SM approach proposed in \cite{TFFS_aut} are smaller than those of \emph{any} linear simulation model iterated $p$ times. This further motivates the use of such predictors both in the cost function and for constraint tightening. We do so in our second contribution, since we propose a new robust MPC scheme that explicitly relies on the optimal SM multi-step models, thus dramatically reducing conservativeness. To deal with the particular structure of the multi-step predictors, which prevents the use of a standard robust MPC approach, we adopt a novel multi-rate receding horizon strategy, for which we prove guaranteed constraint satisfaction and convergence properties. Many multirate schemes have been proposed in the literature for predictive control design, see for example \cite{scattolini1994multirate,wang2006multirate}, \cite{wang2020robust} and the references therein, usually to cope with different sampling rates in outputs sampling, state update, and control implementation. On the contrary, here the multirate implementation stems from the particular form of the predictors.\\
In the last section of the paper, the new approach is compared with that of \cite{TFFS_aut} in a simulation example. The proofs of the main results are reported in Appendix.\smallskip\\
\textbf{Notation}: $I_n$ is the identity matrix of dimension $n$, $\bar{I}_n$ is the matrix with zero entries except for those on the anti-diagonal, which are equal to 1, $0_{m,n}$ is the null matrix of dimensions $m$ and $n$. The Cartesian product between $n$ sets $\textbf{T}_1,\ldots,\textbf{T}_n$ is $\prod\limits_{i=1}^{n}\textbf{T}_i$. For a generic vector $x$, $\|x\|^2\doteq{x^Tx}$ and $\|x\|^2_Q\doteq{x^TQx}$ with $Q$ being a given square matrix of suitable dimension. For a matrix $A$, $\|A\|=\sup_{x\neq0}\frac{\|Ax\|}{\|x\|}$ is its induced 2-norm and $\rho(M)$ its spectral radius, i.e. the maximum absolute value of its eigenvalues.
Given sets $\textbf{A},\,\textbf{B}\subset\mathbb{R}^n$, $\textbf{A}\oplus\textbf{B}=\{a+b:a\in\textbf{A},\,b\in\textbf{B}\}$ and $\textbf{A}\ominus\textbf{B}=\{a\in\textbf{A}:\forall b\in\textbf{B},\,a+b\in\textbf{A}\}$. 
\section{Problem statement, identification algorithm, and error bounds}\label{Statement}
Consider a linear and time-invariant (LTI) discrete-time system of order $n$ with input $u(k)\in\mathbb{R}$, output $z(k)\in\mathbb{R}$, measured output $y(k)\in\mathbb{R}$, process disturbance $v(k)\in\mathbb{R}$, and measurement disturbance $d(k)\in\mathbb{R}$, where $k\in\mathbb{N}$ is the discrete time variable. We define $\varphi^{(p)}_z(k) \in \mathbb{R}^{2n+p-1}$  as:
\begin{align}
\varphi^{(p)}_z(k)=&[z(k),\ldots,z(k-n+1),\\
&u(k-1),\ldots,u(k-n+1),u(k),\dots,u(k+p-1)]^T, \nonumber
\end{align}
with $p \in \mathbb{N}$. The system can be expressed in  ARX (autoregressive-exogenous) form as
\begin{equation}
\begin{cases}
z(k+1)= \bar{\theta}^{(1)^T} \varphi^{(1)}_z(k) + v(k) \\
y(k)=z(k) + d(k),\\
\end{cases}
\label{eq:realsys}
\end{equation}\\
where $\bar{\theta}^{(1)} \in \mathbb{R}^{2n}$ is the vector of unknown parameters.
\begin{assumption}\textit{(Disturbance boundedness)}. \label{A:noise_bound}
$|v(k)|\leq \bar{v}, \quad |d(k)|\leq \bar{d}$, $\quad \forall k \in \mathbb{N}$ with $\bar{d}$ known.\hfill $\square$
\end{assumption}
The value of $\bar{d}$ is assumed to be available from prior knowledge, and/or it can also be estimated from data, see e.g. \cite{LaFa2020}, whereas $\bar{v}$ is not necessarily known.\\
%
Using the SM method presented in \cite{TFFS_aut}, the following predictors of order $o$ can be obtained for all $p$ values up to a finite horizon $\overline{p}$:
\begin{equation}
\hat{z}(k+p)=\hat{\theta}^{(p)*^T}\varphi_y^{(p)}(k),
\label{eq:predictor_theta_phi}
\end{equation}
where $\hat{\theta}^{(p)*}=[\hat{\theta}_{AR}^{(p)*^T} \quad \hat{\theta}_{U}^{(p)*^T} \quad \hat{\theta}_{\bar{U}}^{(p)*^T}]^T$ and  $\hat{\theta}_{AR}^{(p)*^T} \in \mathbb{R}^{o}$, $\hat{\theta}_{U}^{(p)*^T} \in \mathbb{R}^{o-1},\hat{\theta}_{\bar{U}}^{(p)*^T} \in \mathbb{R}^p$ are vectors of known parameters resulting from the identification phase. We  refer to these predictors as \emph{multi-step} in the remainder. The derivation of  $\hat{\theta}^{(p)*}$ for a given value of $p$ is recalled later on in this section. Moreover, in \eqref{eq:predictor_theta_phi}
\begin{equation*}
\begin{array}{rcl}
\varphi^{(p)}_y(k)&=&[y(k),\ldots,y(k-o+1),\\
&&u(k-1),\ldots,u(k-o+1),u(k),\dots u(k+p-1)]^T
\end{array}
\end{equation*}
\begin{assumption}\textit{(Model order)}\label{a:model_order}
The order of the models \eqref{eq:predictor_theta_phi} is $o \geq n$ \hfill $\square$
\label{ass:modelorder}
\end{assumption}
\smallskip
An algorithm to estimate $o$ is described in \cite{LaFa2020}.
The SM learning phase also returns an estimate of the bound on the worst-case prediction error:
\begin{equation}
|z(k+p)-\hat{z}(k+p)|\leq\hat{\tau}_p(\hat{\theta}^{(p)*})
\label{eq:disturbance_tau}
\end{equation}
In fact, for each step $p\leq\bar{p}$ one can derive a guaranteed upper bound $\hat{\tau}_p$ of the difference between the nominal output and its prediction obtained with a generic predictor
\begin{equation}
\hat{z}(k+p)=\hat{\theta}^{(p)^T}\varphi_y^{(p)}(k)
\label{eq:generic_predictor}
\end{equation}
%
%
For the identification of $\hat{\theta}^{(p)*}$ and $\hat{\tau}_p$ a finite number $N$ of measured data is available, composed of pairs $(\varphi^{(p)}_y(k),\,y(k+p)), \quad k=1,\ldots,N$. 
We first estimate an error bound $\hat{\bar{\epsilon}}_p=\alpha\underline{\lambda}_p$,
$\forall p=1,\dots,\bar{p}$, through
$$\begin{array}{c}
\underline{\lambda}_p=\min\limits_{\theta^{(p)},\lambda}\,\lambda \,,\quad
\text{subject to}\\
|y(k+p)-\theta^{(p)^T}\varphi_y^{(p)}(k)|\leq\lambda+\overline{d}, \text{ } k=1,\dots,N,
\end{array}$$
The latter value is inflated by a scalar $\alpha>1$ to account for the fact that the available dataset is finite.
The Feasible Parameter Sets (FPSs) are then defined as
\begin{equation}
\begin{array}{c}
\Theta^{(p)}=\left\{ \hat{\theta}^{(p)} : |y(k+p)-{\hat{\theta}}^{(p)^T}\varphi^{(p)}_y(k)|\leq\hat{\bar{\epsilon}}_p+\overline{d},\right.\\
\left.k=1,\ldots,N  \right\}
\end{array}
\label{FPSdef}
\end{equation}
For each $p$, $\Theta^{(p)}$ is a convex set and, if the data are informative enough, it is also compact. This property can be checked easily by linear programming; if the set $\Theta^{(p)}$ is not bounded then this is a sign that more informative data should be collected. In the remainder, we consider that $\Theta^{(p)}$ is compact for any $p$.
Let us further denote with $\Phi^{(p)} \subseteq \mathbb{R}^{2o-1+p}$ a compact set containing all possible values of $\varphi^{(p)}_y(k)$. In practice, this means that we restrict our analysis and results to a set of system trajectories of interest, which contains the available data points. This is a reasonable assumption in practice.
Since $\bar{\theta}^{(p)}$ in \eqref{eq:realsys} belongs to $\Theta^{(p)}$, the smallest bound on the error $|z(k+p)-\hat{z}(k+p)|$ (see \eqref{eq:disturbance_tau}) $\forall p=1,\dots,\bar{p}$  is:
\begin{equation}
\begin{array}{l}
\lvert z(k+p) - \hat{z}(k+p) \rvert \\
\leq\max\limits_{\varphi^{(p)}_y\in\Phi^{(p)}}\max\limits_{{\theta}^{(p)} \in\Theta^{(p)}}
\lvert ({\theta}^{(p)}-\hat{\theta}^{(p)})^T\varphi^{(p)}_y \rvert + \hat{\bar{\epsilon}}_p
=\tau_p(\hat{\theta}^{(p)})
\end{array}
\label{eq:tau_p}
\end{equation}
The bound \eqref{eq:tau_p} is global, since it holds for any regressor value inside $\Phi^{(p)}$ and for any model compatible with the data, i.e. contained in the set $\Theta^{(p)}$. However it cannot be computed in practice since the set $\Phi^{(p)}$ is not available. On the other hand, an approximation $\hat{\tau}_p(\hat{\theta}^{(p)})\approx\tau_p(\hat{\theta}^{(p)})$ can be easily computed as $\hat{\tau}_p(\hat{\theta}^{(p)})=\gamma \underline{\tau}_p(\hat{\theta}^{(p)})$ with
\begin{align}\label{tau_estim}
\underline{\tau}_p(\hat{\theta}^{(p)})=\max\limits_{k=1,\ldots,N}\max\limits_{{\theta}^{(p)} \in\Theta^{(p)}}
\lvert ({\theta}^{(p)}-\hat{\theta}^{(p)})^T\varphi^{(p)}_y(k) \rvert + \hat{\bar{\epsilon}}_p
\end{align}
i.e. by computing the worst-case prediction error with respect to the available data.
This approximation includes a second scaling factor $\gamma \geq 1$, again to account for the finite available dataset.
The nominal predictor \eqref{eq:predictor_theta_phi}, for each step $p$, is chosen as the minimizer of this worst case error $\underline{\tau}_p(\hat{\theta}^{(p)})$, i.e.
\begin{equation}
\hat{\theta}^{(p)*}=\arg\min_{\hat{\theta}^{(p)} \in \Theta^{(p)}}\underline{\tau}_p(\hat{\theta}^{(p)})
\label{eq:nominal_model}
\end{equation}
%
The following theorem is concerned with the optimality (in terms of size of the uncertainty bound) of the multistep prediction models.
%
\begin{thm}\label{thm1}
Consider any 1-step-ahead LTI system model (i.e. of the form \eqref{eq:generic_predictor} with $p=1$) with coefficient vector $\hat{\theta}^{(1)} \in \mathbb{R}^{2o}$. Let $\hat{\theta}^{(p),1} \in \mathbb{R}^{2o-1+p}$ be the corresponding vector of multi-step predictor coefficients, obtained by iterating $p$ times such a 1-step-ahead model. Then, for all $p=1,\dots,\bar{p}$ it holds:
\begin{equation}
\tau_p(\hat{\theta}^{(p)*}) \leq \tau_p(\hat{\theta}^{(p),1}).
\end{equation}
\label{eq:thm1}
\end{thm}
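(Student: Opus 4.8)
The plan is to show that the multi-step coefficient vector $\hat\theta^{(p),1}$ obtained by iterating a 1-step model is itself a feasible point of the optimization problem \eqref{eq:nominal_model}, i.e.\ that $\hat\theta^{(p),1}\in\Theta^{(p)}$, and then invoke optimality of $\hat\theta^{(p)*}$ together with the structure of $\tau_p(\cdot)$ in \eqref{eq:tau_p}. First I would make the iteration explicit: given $\hat z(k+1)=\hat\theta^{(1)^T}\varphi_y^{(1)}(k)$, substitute the predicted outputs $\hat z(k+1),\dots,\hat z(k+p-1)$ into the regressor at each subsequent step. Because the 1-step model is linear, the $p$-step map is again linear in the measured data $y(k),\dots,y(k-o+1)$ and in the inputs $u(k-o+1),\dots,u(k+p-1)$, so there is a well-defined $\hat\theta^{(p),1}\in\mathbb{R}^{2o-1+p}$ (the dimension count matches: the regressor $\varphi_y^{(p)}$ has $2o-1+p$ entries). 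I would write this composition as a product of ``shift-and-append'' matrices so the dependence is transparent, but I would not grind through the entries.

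Next I would argue feasibility. The key observation is that iterating the 1-step model $p$ times on the \emph{true} data trajectory produces a prediction whose error from $z(k+p)$ — and hence from $y(k+p)$ up to the additive $d$ and $v$ contributions — is controlled. More precisely, the defining inequality of $\Theta^{(p)}$ is $|y(k+p)-\hat\theta^{(p),1^T}\varphi_y^{(p)}(k)|\le \hat{\bar\epsilon}_p+\bar d$ for all data points $k=1,\dots,N$. Since $\hat{\bar\epsilon}_p=\alpha\underline\lambda_p$ with $\alpha>1$ and $\underline\lambda_p$ is by construction the \emph{smallest} $\lambda$ for which \emph{some} parameter vector in $\mathbb{R}^{2o-1+p}$ satisfies $|y(k+p)-\theta^{(p)^T}\varphi_y^{(p)}(k)|\le\lambda+\bar d$ on all data, any $\theta^{(p)}$ achieving the minimum is automatically in $\Theta^{(p)}$; more importantly, whatever finite worst-case data error the iterated model $\hat\theta^{(p),1}$ attains, it is $\ge\underline\lambda_p$, and I need it to be $\le\hat{\bar\epsilon}_p$. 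Here is where the scaling factor $\alpha>1$ (and, at the level of $\hat\tau_p$, also $\gamma\ge1$) is meant to absorb the slack; I would state the clean version of the claim as: \emph{for $\alpha$ chosen large enough that $\hat{\bar\epsilon}_p$ dominates the data-residual of the iterated model, $\hat\theta^{(p),1}\in\Theta^{(p)}$}, and note that in \cite{TFFS_aut} this is exactly how the FPS is arranged to contain the iterated simulation model.

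Given feasibility, the conclusion is immediate: $\hat\theta^{(p)*}=\arg\min_{\hat\theta^{(p)}\in\Theta^{(p)}}\underline\tau_p(\hat\theta^{(p)})$ minimizes the worst-case data-based error over $\Theta^{(p)}$, so $\underline\tau_p(\hat\theta^{(p)*})\le\underline\tau_p(\hat\theta^{(p),1})$; and since $\tau_p$ in \eqref{eq:tau_p} is the same functional with the maximization extended from the $N$ data regressors to all of $\Phi^{(p)}$, the inner maximum $\max_{\theta^{(p)}\in\Theta^{(p)}}|(\theta^{(p)}-\hat\theta^{(p)})^T\varphi_y^{(p)}|$ is a convex function of $\hat\theta^{(p)}$ (a supremum of affine functions), so passing to the outer $\max$ over $\varphi_y^{(p)}\in\Phi^{(p)}$ and adding the constant $\hat{\bar\epsilon}_p$ preserves the inequality $\tau_p(\hat\theta^{(p)*})\le\tau_p(\hat\theta^{(p),1})$. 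The one subtlety to address carefully is the dimension/embedding of $\hat\theta^{(1)}\in\mathbb{R}^{2o}$ into the $p$-step parameter space (the ``$2o$ vs.\ $2o-1+p$'' bookkeeping, and that for $p=1$ the statement is a tautology), which is routine.

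The main obstacle, in my view, is pinning down the feasibility step rigorously rather than hand-waving it with ``$\alpha>1$ handles it.'' One must verify that the worst-case residual on the $N$ data points of the iterated model $\hat\theta^{(p),1}$ is indeed no larger than $\hat{\bar\epsilon}_p+\bar d$ — this presumably follows because the 1-step residuals are bounded by $\underline\lambda_1+\bar d$ and error propagation through $p$ stable iterations can be bounded (using asymptotic stability of the plant, Assumption on model order, and the compactness of $\Phi^{(p)}$), so that a single global constant $\alpha$ works uniformly in $p\le\bar p$. If instead the paper simply \emph{defines} $\alpha$ to be large enough for this containment to hold (which is consistent with the finite-data rationale given for $\alpha$ and $\gamma$), then the proof is as above and the feasibility step is a one-line invocation. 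I would check which of these two readings the authors intend and phrase accordingly.
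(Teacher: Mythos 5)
Your reduction to ``feasibility plus optimality'' only covers half of the theorem, and the half it misses is the one that contains the actual idea. The theorem is stated for \emph{any} 1-step model $\hat{\theta}^{(1)}$, and there is no reason for the iterated coefficients $\hat{\theta}^{(p),1}$ to lie in $\Theta^{(p)}$: the map $\hat{\theta}^{(1)}\mapsto\hat{\theta}^{(p),1}$ is polynomial and does not respect the feasible parameter sets, which are defined independently for each $p$; moreover $\hat{\theta}^{(1)}$ need not even be a stable model, so its $p$-fold iteration can have arbitrarily large residuals on the data. Your attempt to salvage feasibility through the inflation factor $\alpha$ does not work: $\alpha>1$ is a fixed constant accounting for the finiteness of the dataset, not a quantity chosen a posteriori to absorb the residual of every conceivable iterated simulation model --- and no finite $\alpha$ could do so uniformly over all $\hat{\theta}^{(1)}\in\mathbb{R}^{2o}$. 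The paper indeed splits into the two cases $\hat{\theta}^{(p),1}\in\Theta^{(p)}$ (where your argument is essentially what is written) and $\hat{\theta}^{(p),1}\notin\Theta^{(p)}$. The latter is handled by a geometric argument you are missing: for any $\theta\in\Theta^{(p)}$, let $\theta^{(p)''}=(1-\bar{\alpha})\theta+\bar{\alpha}\hat{\theta}^{(p),1}$ with $\bar{\alpha}$ the largest value in $[0,1]$ keeping the combination inside the convex set $\Theta^{(p)}$, i.e.\ the boundary point of $\Theta^{(p)}$ on the segment toward $\hat{\theta}^{(p),1}$. Since $\theta-\theta^{(p)''}$ and $\theta^{(p)''}-\hat{\theta}^{(p),1}$ are nonnegative multiples of the same vector, one has $\lvert\varphi^T(\theta-\hat{\theta}^{(p),1})\rvert=\lvert\varphi^T(\theta-\theta^{(p)''})\rvert+\lvert\varphi^T(\theta^{(p)''}-\hat{\theta}^{(p),1})\rvert\geq\lvert\varphi^T(\theta-\theta^{(p)''})\rvert$, so the worst-case error of the infeasible point dominates that of a feasible point, and hence dominates $\min_{\theta''\in\Theta^{(p)}}\max_{\theta\in\Theta^{(p)}}\max_{\varphi\in\Phi^{(p)}}\lvert\varphi^T(\theta-\theta'')\rvert+\hat{\bar{\epsilon}}_p=\tau_p(\hat{\theta}^{(p)*})$. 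Without this step your proof simply does not apply to the class of models the theorem quantifies over.

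A secondary weak point: your ``convexity of the inner maximum'' remark does not bridge the gap between $\underline{\tau}_p$ (a maximum over the $N$ data regressors, which $\hat{\theta}^{(p)*}$ minimizes by \eqref{eq:nominal_model}) and $\tau_p$ (a maximum over all of $\Phi^{(p)}$, which is what the theorem is about): a minimizer of the smaller max need not minimize the larger one, and convexity of the objective does not change that. The paper sidesteps this by writing $\tau_p(\hat{\theta}^{(p)*})$ directly as the min--max over $\Theta^{(p)}\times\Phi^{(p)}$, i.e.\ it implicitly identifies $\hat{\theta}^{(p)*}$ with the minimizer of $\tau_p$; if you insist on working from the literal definition \eqref{eq:nominal_model}, you must flag this identification as an assumption rather than derive it from convexity.
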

\begin{proof}
See the Appendix.
\end{proof}
%
%
Theorem \ref{thm1} justifies the use of multi-step models for robust MPC design, since in general they yield smaller error bounds.
%
\section{MPC design and properties} \label{Formulation}
%
The multi-step models previously introduced can not be directly used in existing robust MPC schemes. Therefore we propose a new multirate MPC approach where  the predicted behavior of the system is optimized by considering a prediction/control horizon of $N_p$ ``long'' steps, with index $j\in\mathbb{N},$ each one consisting of $\bar{p}$ ``short'' sampling times with index $k$. Note that the ``short'' sampling interval is the one assumed for the true system \eqref{eq:realsys}.
The optimal control problem is thus solved at every long step $j$  (i.e. every $\bar{p}$ short steps) and the solution provides the values of the control input to be applied at each step $k$ in the interval $\{j\bar{p},\dots,\,(j+1)\bar{p}-1\}$ according to a standard receding horizon formulation. For  clarity, we represent the long and short sampling times on a common time-scale in Figure \ref{Figure:scales}. 
Also, in the remainder we will use the upper-case letters to denote variables defined at a long sampling time.
%
\begin{figure}[thpb]
\centering
\includegraphics[scale=0.3]{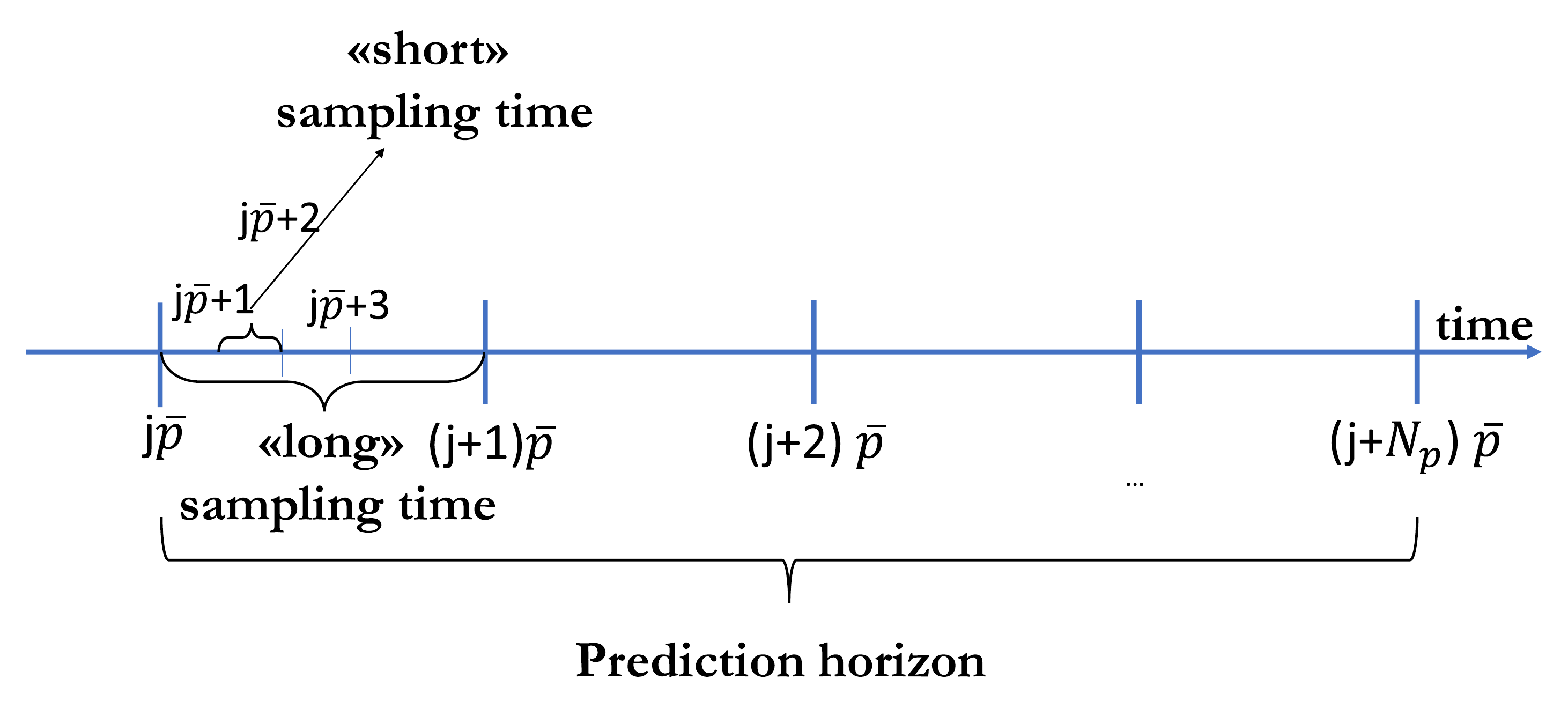}
\caption{Sketch of the time-scales involved in the simulation, with ``long'' and ``short'' sampling times.}
\label{Figure:scales}
\end{figure}
%
%
%
Assume $\bar{p}>o$ for simplicity, although it is not necessary, and define the system state, the input, and the disturbance at time $j$ as $X(j)=[y(j\bar{p}), \dots y(j\bar{p}-o+1), u(j\bar{p}-1), \dots, u(j\bar{p}-o+1)]^T$, $U(j)=[u(j\bar{p}), \dots, u(j\bar{p}+\bar{p}-1)]^T$, $W(j)=[w_1(j\bar{p}), \dots, w_{\bar{p}}(j\bar{p})]^T,$ respectively. Denote with $\bar{w}_p$ a value such that $|w_p(j\bar{p})|\leq  \bar{w}_p$, for all $p=1, \dots,\bar{p}$, which accounts for the error stemming from the identification procedure, the process noise, and the measurement disturbance. Given the bound \eqref{tau_estim}, since the state $X(j)$ comprises samples of the measured output  $y$ affected by measurement noise $d$, it is possible to obtain $\bar{w}_p$ as
\begin{equation}
\bar{w}_p=\hat{\tau}_p(\hat{\theta}^{(p)*})+\bar{d}
\label{disturbance_w}
\end{equation}
thus directly exploiting the multi-step error bounds previously obtained.
The state transition equation, that maps the current state $X(j)$ into the $\bar{p}$ steps ahead state $X(j+1)$, is:
\begin{equation}
X(j+1)=\bar{A}X(j)+\bar{B}U(j)+\bar{M}W(j)
\label{eq:state_equation_psteps}
\end{equation}
where:
\begin{equation}
\begin{array}{l}
\bar{A}=\begin{bmatrix}
\hat{\theta}_{AR}^{(\bar{p})*^T} & \hat{\theta}_{U}^{(\bar{p})*^T}\\
\vdots & \vdots\\
\hat{\theta}_{AR}^{(\bar{p}-o+1)*^T} & \hat{\theta}_{U}^{(\bar{p}-o+1)*^T}\\
0_{o-1,o}& 0_{o-1,o-1}
\end{bmatrix},
\bar{B}=\begin{bmatrix}
\hat{\theta}_{\bar{U}}^{(\bar{p})*^T}\\
\vdots\\
\begin{bmatrix} \hat{\theta}_{\bar{U}}^{(\bar{p}-o+1)*^T} & 0_{1,o-1}\end{bmatrix}\\
\begin{bmatrix}
0_{o-1,\bar{p}-o+1} & \bar{I}_{o-1}
\end{bmatrix}
\end{bmatrix}\\
\\
\bar{M}=\begin{bmatrix}
0_{o,\bar{p}-o} & \bar{I}_{o}\\
0_{o-1,\bar{p}-o}& 0_{o-1,o}
\end{bmatrix}
\end{array}
\end{equation}
The following assumption is introduced.
\smallskip
\begin{assumption}
\label{assump:AB}
The pair $(\bar{A},\bar{B})$ is stabilizable.\hfill{}$\square$
\end{assumption}
Since the model is obtained from input-output data, Assumption \ref{assump:AB} is usually satisfied in practice and is thus not restrictive. We rewrite models \eqref{eq:disturbance_tau} as system output equations:
\begin{align}
\hat{z}(j\bar{p}+p)=C_p X(j)+D_p U(j)
\label{eq:predictors}
\end{align}
where
$C_p=\begin{bmatrix}\hat{\theta}_{AR}^{(p)*^T} & \hat{\theta}_{U}^{(p)*^T} \end{bmatrix}$, $D_p=\begin{bmatrix}
\hat{\theta}_{\bar{U}}^{(p)*^T} & 0_{1,\overline{p}-p}\end{bmatrix}$.
\smallskip\\
Consistently with \eqref{eq:disturbance_tau}, we can write
\begin{equation}
z(j\bar{p}+p)=C_p X(j)+D_p U(j)+w_p(j\bar{p})
\label{eq:perturbedy}
\end{equation}
For notational convenience let us stack matrices $C_p$ and $D_p$, for all $p=1,\dots \bar{p}$, as
\begin{equation}
\bar{C}=\begin{bmatrix}C_1^T& \dots &C_{\bar{p}}^T\end{bmatrix}^T, \quad \bar{D}=\begin{bmatrix}D_1^T& \dots&D_{\bar{p}}^T\end{bmatrix}^T
\end{equation}
so that we can define the predictions of outputs in the long sampling time, but at a short sampling period basis, as $\hat{Z}(j)=\begin{bmatrix}\hat{z}(j\bar{p}+1)&\dots& \hat{z}(j\bar{p}+\bar{p}) \end{bmatrix}^T$. Thanks to the predictors in \eqref{eq:predictors}, we write
\begin{equation}
\hat{Z}(j)=\begin{bmatrix}\hat{z}(j\bar{p}+1)\\\vdots\\ \hat{z}(j\bar{p}+\bar{p}) \end{bmatrix}=\bar{C}X(j)+\bar{D}U(j)
\label{eq:Zpiled}
\end{equation}
%
%
In the control design phase a tube-based robust control approach is used \cite{mayne2005robust} and the input $U(j)$ is defined as 
\begin{equation}
U(j)=\bar{U}(j)+K(X(j)-\bar{X}(j))
\label{eq:controllaw}
\end{equation}
The input $\bar{U}(j)$ will be computed by MPC, while the term $K(X(j)-\bar{X}(j))$ aims to reduce the error between the state $\bar{X}(j)$ of a suitably defined nominal dynamic system and the actual value of $X(j)$, available at time $k=j\bar{p}$. The gain $K$ is chosen such that $\bar{F}=\bar{A}+\bar{B}K$ is Schur stable, which is possible thanks to Assumption \ref{assump:AB}.\\
The nominal dynamic system is defined based on \eqref{eq:state_equation_psteps}:
\begin{equation}
\bar{X}(j+1)=\bar{A}\bar{X}(j)+\bar{B}\bar{U}(j)
\label{eq:nomstate_eq}
\end{equation}
%
The $p$ steps ahead nominal output predictor corresponding to \eqref{eq:predictors} is computed as:
\begin{align}
\hat{\bar{Z}}(j)=\bar{C}\bar{X}(j)+\bar{D} \bar{U}(j)
\label{z_predictors_costfcn}
\end{align}
The difference between the real available data vector $X(j)$ and the state of the nominal system is defined as $E(j)=X(j)-\bar{X}(j)$. From \eqref{eq:state_equation_psteps} and \eqref{eq:nomstate_eq}, it evolves according to:
\begin{equation}
E(j+1)=(\bar{A}+\bar{B}K)E(j)+\bar{M}W(j)
\label{eq:errordyn}
\end{equation}
Let $\mathbb{E}$ be a 
robust positively invariant (RPI) \cite{RakovicKouramas2007} set for the system \eqref{eq:errordyn}. Similarly to \cite{mayne2005robust}, the constraints and the optimization problem will be defined with reference to the nominal model \eqref{eq:nomstate_eq}. This will require to define suitable tightened state and input constraints, that allow one to account for the difference between $\bar{X}(j)$ and $X(j)$.
\begin{remark}\label{Rem:comparison}
In \eqref{eq:errordyn} only the last $o$ components of $W(j)$ are involved in the computation of $\mathbb{E}$, and they depend on the estimates $\hat{\tau}_p(\hat{\theta}^{(p)*})$ of the bounds proved to be optimal in Theorem \ref{thm1}, see \eqref{disturbance_w}.
Moreover, since $\bar{A} + \bar{B}K$ is Schur stable and evolves over a (possibly long) $\bar{p}$-steps-ahead period, it is prone to have a smaller spectral radius and norm with respect to the one corresponding to a 1-step state space model, e.g. the one considered in \cite{TFFS_aut}. Thus, this results in a smaller set $\mathbb{E}$ and less conservative constraint tightening, as also illustrated in the example of Section \ref{sec:simulations}.\hfill{}$\square$
\end{remark}
%
%
The MPC controller must guarantee the fulfillment of input and output constraints for all $k\geq 0$:
\begin{equation}
\label{eq:constraints_YU}
u(k)\in\mathbb{U} \quad , \quad z(k) \in \mathbb{Z}
\end{equation}
where $\mathbb{U}$ and $\mathbb{Z}$ are suitable convex sets containing the origin in their interior.
For ease of notation, let us introduce the higher-dimensional convex sets $\mathbb{\textbf{U}}=\mathbb{U}^{\bar{p}}$ and $\mathbb{\textbf{Z}}=\mathbb{Z}^{\bar{p}}$. Similarly to \cite{mayne2005robust}, it is first necessary to constrain $\bar{X}(j)$ at time $j\bar{p}$ to lie in the neighborhood of $X(j)$, i.e
\begin{subequations}
\begin{align}
X(j)-\bar{X}(j) \in \mathbb{E}
\end{align}
Regarding the input variable, to guarantee that \eqref{eq:constraints_YU} holds from time $j\bar{p}$ to $(j+N_p-1)\bar{p}$, it is enough to enforce the following tightened constraints, for all $i=0,\dots,N_p-1$.
\label{eq:constraints}
\begin{align}
\bar{U}(j+i) \in \mathbb{\textbf{U}}\ominus K\mathbb{E}\label{constraint:Utightened}
\end{align}
As for the output, to guarantee that \eqref{eq:constraints_YU} holds at time $j\bar{p}+1,\dots,(j+N_p)\bar{p}$, we define $\forall p=1,\dots, \bar{p}$
$$\mathbb{T}_p=\{t \in \mathbb{R}: |t|\leq \hat{\tau}_p(\hat{\theta}^{(p)*})\}$$ and the tightened set $\hat{\mathbb{\mathbf{Z}}}$ as
\begin{equation}
\hat{\mathbb{\mathbf{Z}}}=\mathbb{\mathbf{Z}} \ominus \prod_{p=1}^{\bar{p}}\mathbb{T}_p
\end{equation}
This set is such that, by construction, if $\hat{Z}(j+i)\in \hat{\mathbb{\mathbf{Z}}}$, then $Z(j+i)\in \mathbb{\mathbf{Z}}, \quad i=0,\dots,N_p-1$.
We thus enforce the following tightened constraint, again related to the nominal system \eqref{eq:nomstate_eq}, for all $i=0,\dots,N_p-1$.\begin{align}
\hat{\bar{Z}}(j+i)\in \hat{\mathbb{\mathbf{Z}}}\ominus (\bar{C}+\bar{D}K)\mathbb{E}
\end{align}

Finally, to guarantee recursive feasibility, we also need to enforce a terminal constraint of the type
\begin{equation}
\bar{X}(j+N_p)\in\mathbb{X}_F
\end{equation}
where $\mathbb{X}_F$ is defined as a positively invariant set for the system $\hat{X}(j+1)=(\bar{A}+\bar{B}K)\hat{X}(j)$ that verifies
\begin{itemize}
\item $(\bar{C} + \bar{D}K)\mathbb{X}_F \subseteq \hat{\mathbb{\mathbf{Z}}}\ominus (\bar{C}+\bar{D}K)\mathbb{E}$
\item $K\mathbb{X}_F\subseteq \mathbb{\mathbf{U}}\ominus K\mathbb{E}$
\end{itemize}
\end{subequations}
For consistency, the following assumption is required.
\begin{assumption}
\label{assump:sets}
There exists a ball $\mathcal{B}$ in space $\mathbb{R}^{\bar{p}}$, centered at the origin and with radius $\varepsilon$, such that
\begin{subequations}
\begin{align}
(\bar{C}+\bar{D}K)\mathbb{E}\oplus\mathcal{B} &\subseteq \hat{\mathbb{\mathbf{Z}}}\\
K\mathbb{E} \oplus\mathcal{B}& \subseteq \mathbb{\mathbf{U}}
\end{align}\hfill{}$\square$
\end{subequations}
\label{hyp:setcont}
\end{assumption}
%
%
%
The cost function to be minimized at time step $k$ is
$$J(j)=\sum_{i=0}^{N_p-1}\|\hat{\bar{Z}}(j+i)\|^2_{Q}+\|\bar{U}(j+i)\|^2_{R}+\|\bar{X}(j+N_p)\|^2_P$$
where $Q=\text{diag}(q_1,\dots,q_{\bar{p}})>0$, $R=\text{diag}(r_0,\dots,r_{\bar{p}-1})>0$, $N_p$ is the prediction horizon, and $P$ is the unique positive definite solution to the Riccati equation (see Assumption \ref{assump:AB})
\begin{equation}
\bar{F}^TP\bar{F}-P=-\left( \bar{G}^TQ\bar{G}+K^TRK \right)
\label{eq:Riccati}
\end{equation}
\noindent where $\bar{G}=(\bar{C}+\bar{D}K)$. 
Note that $Q$ and $R$ can be chosen freely while in \cite{TFFS_aut} they were selected according to the solution to an LMI problem, so limiting the possible trade-offs between bandwidth and control activity of the closed-loop system.\\
Now, denoting the vector of decision variables with
$$\mathbf{\bar{U}}(j)=\begin{bmatrix}\bar{U}(j)^T& \dots &\bar{U}(j+N_p-1)^T\end{bmatrix}^T,$$
the optimization problem to be solved at each ``long'' sampling time $j\geq 0$, reads
\begin{align}
J^*(j)=&\min_{\bar{X}(j),\mathbf{\bar{U}}(j)}J(j)
\text{ s.t. constraints } \eqref{eq:constraints}\label{eq:optprb}
\end{align}
If problem \eqref{eq:optprb} is feasible, its solution is denoted with
$\bar{X}^*(j), \mathbf{\bar{U}}^*(j)=[\bar{U}^*(j)^T,\dots,\bar{U}^*(j+N_p-1)^T]^T$, and the input sequence $U^*(j)=\bar{U}^*(j)+K(X(j)-\bar{X}^*(j))$ in \eqref{eq:controllaw} is applied to the system according to the Receding Horizon principle. Also, we denote with  $\bar{X}^*(j+i)$ the future nominal state predictions generated using \eqref{eq:nomstate_eq} with input $\mathbf{\bar{U}}^*(j)$, as well as all the other derived quantities, such as $\hat{\bar{Z}}^*(j)$ (see \eqref{z_predictors_costfcn}).
%
%
\begin{thm}
\label{res1}
If \eqref{eq:optprb} is feasible at time step $j=0$ then it is feasible at all time steps $j>0$ and, for all $j\geq 0$, the constraints \eqref{eq:constraints_YU} are satisfied. Moreover, $\hat{\bar{Z}}^*(j) \to 0$ as $j \to \infty$. Finally, $\delta(Z(j),(\bar{C}+\bar{D}K)\mathbb{E}) \to 0$ as $j \to \infty$, where $\delta (\alpha,\beta)$ denotes the distance between point $\alpha$ and set $\beta$\hfill{}$\square$.
\end{thm}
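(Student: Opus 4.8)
The plan is to adapt the tube-based MPC argument of \cite{mayne2005robust} to the lifted multi-rate model \eqref{eq:state_equation_psteps}--\eqref{eq:errordyn}, proving in turn (i) recursive feasibility, (ii) satisfaction of \eqref{eq:constraints_YU}, and (iii) convergence via a Lyapunov-type decrease of the optimal cost $J^*(j)$; existence of $K$ (hence of $P$ from \eqref{eq:Riccati}) and non-emptiness of the tightened sets and of $\mathbb{X}_F$ are ensured by Assumptions \ref{assump:AB} and \ref{hyp:setcont}. For recursive feasibility, assume \eqref{eq:optprb} is feasible at $j$ with optimizers $\bar{X}^*(j),\mathbf{\bar{U}}^*(j)$ and associated nominal quantities $\bar{X}^*(j+i),\hat{\bar{Z}}^*(j+i)$. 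I would construct a candidate at $j+1$ by taking as initial nominal state the one-long-step nominal prediction $\bar{X}^{\mathrm c}(j+1)=\bar{A}\bar{X}^*(j)+\bar{B}\bar{U}^*(j)=\bar{X}^*(j+1)$, and as input sequence the shift $\bar{U}^{\mathrm c}(j+1+i)=\bar{U}^*(j+1+i)$, $i=0,\dots,N_p-2$, completed by the terminal feedback $\bar{U}^{\mathrm c}(j+N_p)=K\bar{X}^*(j+N_p)$. One then checks: the tube constraint $X(j+1)-\bar{X}^{\mathrm c}(j+1)\in\mathbb{E}$ holds because subtracting \eqref{eq:nomstate_eq} from \eqref{eq:state_equation_psteps} along \eqref{eq:controllaw} gives $X(j+1)-\bar{X}^*(j+1)=(\bar{A}+\bar{B}K)E(j)+\bar{M}W(j)=E(j+1)$, and $\mathbb{E}$ is RPI for \eqref{eq:errordyn} with $E(j)\in\mathbb{E}$ (true at $j=0$ by the initial constraint, hence for all $j$ by induction); for $i=0,\dots,N_p-2$ the shifted input/output constraints coincide with ones already satisfied by the optimizer at $j$; and the last-stage and terminal constraints follow from $\bar{X}^*(j+N_p)\in\mathbb{X}_F$ together with the defining properties $K\mathbb{X}_F\subseteq\mathbb{\mathbf{U}}\ominus K\mathbb{E}$, $(\bar{C}+\bar{D}K)\mathbb{X}_F\subseteq\hat{\mathbb{\mathbf{Z}}}\ominus(\bar{C}+\bar{D}K)\mathbb{E}$, and $(\bar{A}+\bar{B}K)\mathbb{X}_F\subseteq\mathbb{X}_F$. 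Feasibility at all $j>0$ then follows by induction from feasibility at $j=0$.

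For constraint satisfaction, at every $j$ feasibility yields $E(j)=X(j)-\bar{X}^*(j)\in\mathbb{E}$ and $\bar{U}^*(j)\in\mathbb{\mathbf{U}}\ominus K\mathbb{E}$, so by definition of $\ominus$ the applied input $U^*(j)=\bar{U}^*(j)+KE(j)\in\mathbb{\mathbf{U}}=\mathbb{U}^{\bar{p}}$, i.e.\ $u(k)\in\mathbb{U}$ for every short step $k\in\{j\bar{p},\dots,(j+1)\bar{p}-1\}$. For the output, \eqref{eq:Zpiled}, \eqref{eq:controllaw} and \eqref{z_predictors_costfcn} give $\hat{Z}(j)=\hat{\bar{Z}}^*(j)+(\bar{C}+\bar{D}K)E(j)\in\hat{\mathbb{\mathbf{Z}}}$, while \eqref{eq:perturbedy} gives $Z(j)=\hat{Z}(j)+W(j)$ with $W(j)\in\prod_{p=1}^{\bar{p}}\mathbb{T}_p$ by construction of $\mathbb{T}_p$; since $\hat{\mathbb{\mathbf{Z}}}=\mathbb{\mathbf{Z}}\ominus\prod_p\mathbb{T}_p$, this gives $Z(j)\in\mathbb{\mathbf{Z}}=\mathbb{Z}^{\bar{p}}$, i.e.\ $z(k)\in\mathbb{Z}$ for every short step. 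Letting $j$ range over $\mathbb{N}$ covers all $k\geq0$ (with $z(0)$ part of the feasible initial condition).

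For convergence, evaluating the cost along the candidate of step (i): the stage terms telescope against those of $J^*(j)$ and only the first stage and the terminal terms remain, so, writing $\bar{G}=\bar{C}+\bar{D}K$, the candidate cost $J^{\mathrm c}(j+1)$ satisfies $J^{\mathrm c}(j+1)-J^*(j)=-\|\hat{\bar{Z}}^*(j)\|_Q^2-\|\bar{U}^*(j)\|_R^2+\bar{X}^*(j+N_p)^T(\bar{G}^TQ\bar{G}+K^TRK+\bar{F}^TP\bar{F}-P)\bar{X}^*(j+N_p)$, and the quadratic form vanishes by the Riccati identity \eqref{eq:Riccati}. By optimality $J^*(j+1)\leq J^{\mathrm c}(j+1)$, hence $J^*(j+1)\leq J^*(j)-\|\hat{\bar{Z}}^*(j)\|_Q^2-\|\bar{U}^*(j)\|_R^2$; since $J^*(j)\geq0$, summing over $j$ gives $\sum_{j\geq0}(\|\hat{\bar{Z}}^*(j)\|_Q^2+\|\bar{U}^*(j)\|_R^2)<\infty$, so $\hat{\bar{Z}}^*(j)\to0$ and $\bar{U}^*(j)\to0$ as $j\to\infty$ because $Q,R>0$. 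Finally, decomposing $Z(j)=\hat{\bar{Z}}^*(j)+(\bar{C}+\bar{D}K)E(j)+W(j)$ with $(\bar{C}+\bar{D}K)E(j)\in(\bar{C}+\bar{D}K)\mathbb{E}$ and using $\hat{\bar{Z}}^*(j)\to0$ and $W(j)\in\prod_p\mathbb{T}_p$, the point $Z(j)$ converges to $(\bar{C}+\bar{D}K)\mathbb{E}$ (up to that already tightened-out margin), which yields the last assertion.

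I expect the only delicate point to be the last-stage/terminal verification in step (i): showing that appending the terminal feedback $K\bar{X}^*(j+N_p)$ to the shifted sequence leaves \emph{all} the tightened input, output, and terminal constraints simultaneously satisfied. This is exactly what the construction of $\mathbb{X}_F$ (and Assumption \ref{hyp:setcont}, guaranteeing it is non-empty and contains a neighbourhood of the origin) is designed to provide, but it requires care in tracking how the stacked output map $\bar{C}+\bar{D}K$ and the gain $K$ act on $\mathbb{X}_F$ and in matching the Minkowski differences used for the tightening. The remaining ingredients — the set-inclusion bookkeeping for the tube and the telescoping of the quadratic cost — are routine given the development of Section \ref{Formulation}.
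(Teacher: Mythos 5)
Your proposal is correct and follows essentially the same route as the paper: the shifted optimal sequence completed by the terminal feedback $K\bar X^*(j+N_p)$ as the feasible candidate, RPI of $\mathbb{E}$ and the $\mathbb{X}_F$ inclusions for recursive feasibility, Minkowski-sum bookkeeping for constraint satisfaction, and the Riccati-based telescoping cost decrease for convergence. The only difference is that you spell out the cost-decrease computation that the paper dispatches as ``standard arguments,'' and you are (appropriately) explicit that the final distance claim for $Z(j)$ holds up to the $\prod_p\mathbb{T}_p$ margin already absorbed in the tightening.
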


\smallskip

\begin{proof}
	See the Appendix.
\end{proof}

\section{Simulation example}\label{sec:simulations}

Consider the system employed in \cite{TFFS_aut}, obtained by discretizing, with sampling time $T_s=0.1$, the continuous-time transfer function
\[
G(s)=\frac{160}{(s+10)(s^2+1.6s+16)}
\]
A dataset of 1000 pairs $(u,y)$ has been collected by exciting the system with a signal $u$ taking value in $\{-1,0,1\}$ randomly each $5$ units of time, and adding the disturbance $v(k)$ and $d(k)$, with $\bar{v}=0.01$ and $\bar{d}=0.1$, respectively, consistently with \eqref{eq:realsys}.
The multi-step bounds estimates $\hat{\tau}_p(\hat{\theta}^{(p)*})$ have been computed according to the algorithm described in \cite{TFFS_aut}, with $\bar{p}=10$ (resulting in a ``long" sampling time equal to $T_s\bar{p}=1 s$) and model order $o=4$. In Figure \ref{Fig_bounds} they are plotted and compared with the bounds computed by simply iterating the simulation model (i.e., the $1$-step ahead predictor) and propagating its uncertainty bound accordingly.

In the control design phase, the matrix $K$ has been computed  with Linear Quadratic (LQ) control, while the prediction horizon for the MPC controller is $N_p=3$. The weighting matrices are defined as $Q=100I_{\bar{p}}$ and $R=1I_{\bar{p}}$, while matrix $P$ is obtained thanks to \eqref{eq:Riccati}.
Both the input $u$ and the output $z$ have been enforced to belong to the set $[-10,10]$ for each time instant.

The input and output trajectories, comparing the closed-loop with the open-loop response of the system, are plotted in Figures \ref{Figure:u} and \ref{Figure:z} together with the relevant bounds. The controller, based on the identified model, is able to regulate the real system \eqref{eq:realsys} to zero with a much faster time constant and sensibly damping the oscillations.
In Table \ref{table:comparison} we also report, for the same tuning of the LQ problem, the spectral radius and norm of the state transition matrix of the nominal system \eqref{eq:nomstate_eq} subject to the auxiliary law $K$, see also Remark \ref{Rem:comparison}. Note that the norm of such matrix directly affects the computation of the invariant set $\mathbb{E}$. Moreover, by comparing the effect on the constraint tightening, we note that, while in \cite{TFFS_aut} the tightened output constraints correspond to the interval $[-7.7,7.7]$ for each prediction step and the input constraints to the interval $[-9.05,9.05]$, with the new algorithm proposed. here we obtain the following box-inequalities, to be intended entry-wise, $i=0,\dots,N_p-1$:
Specifically, define

\scalebox{0.75}{\parbox{\linewidth}{\begin{eqnarray*}
   Z_m &=& \begin{bmatrix}
8.3 \quad 7.4 \quad  7.8 \quad  8.2 \quad  8.8 \quad  9.0 \quad  9.0 \quad  9.3 \quad  9.1 \quad  8.9
\end{bmatrix} \\
  U_m &=& \begin{bmatrix}
9.77 \quad  9.68 \quad  9.72 \quad  9.60 \quad  9.53 \quad  9.60 \quad  9.67 \quad  9.88 \quad  9.87 \quad  9.92
\end{bmatrix}
\end{eqnarray*}
}}

and the constraints\\
\begin{equation*}
  -Z_m^T\leq \hat{\bar{Z}}(j+i)\leq Z_m^T \text{ and } -U_m^T\leq \bar{U}(j+i)\leq U_m^T
\end{equation*}


\noindent which confirm a conservativeness reduction.

\large

\begin{table}[h]
	\caption{Table of comparison of radius and spectral norm of state transition matrix}
	\label{table:comparison}
\centering
\scalebox{1.25}{
\begin{tabular}{cc}
	Algorithm in \cite{TFFS_aut} & Proposed algorithm
	\\
	\hline
	\\
	$\rho(A+B_1K)=0.78$ 	& $\rho(\bar{A}+\bar{B}K)=0.2974$\\
	$\|A+B_1K\|=1.77$ 		& $\|\bar{A}+\bar{B}K\|=0.455$
\end{tabular}}
\end{table}

\normalsize

\begin{figure}[thpb]
\centering
\includegraphics[scale=0.45]{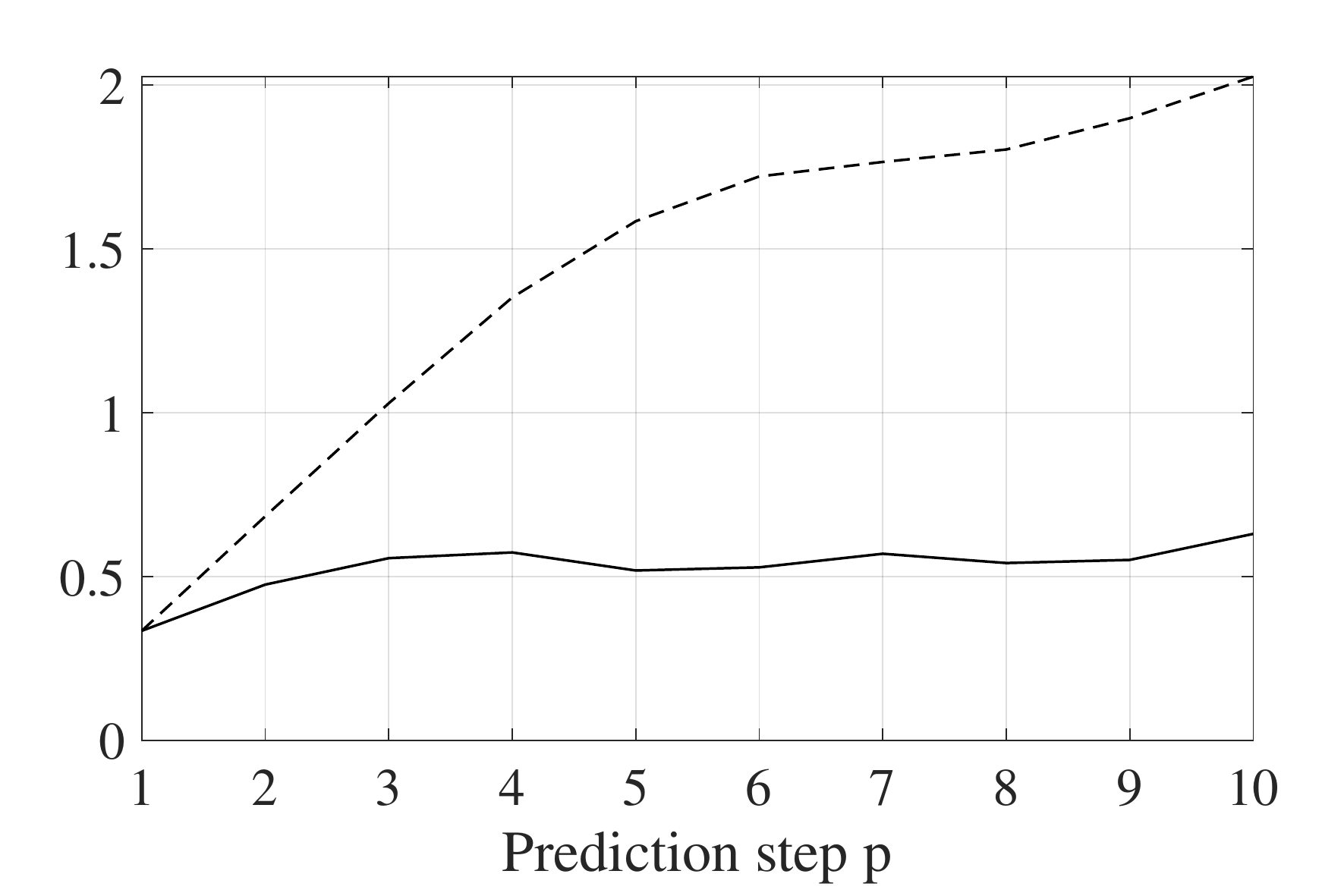}
\caption{Computed bounds. Dashed line: bound obtained by iterating $\bar{w_1}$ with the one-step model, solid line: bounds $\bar{w}_p, p=1,\dots,\bar{p}$ } \label{Fig_bounds}
\end{figure}

\begin{figure}[thpb]
\centering
\includegraphics[scale=0.45]{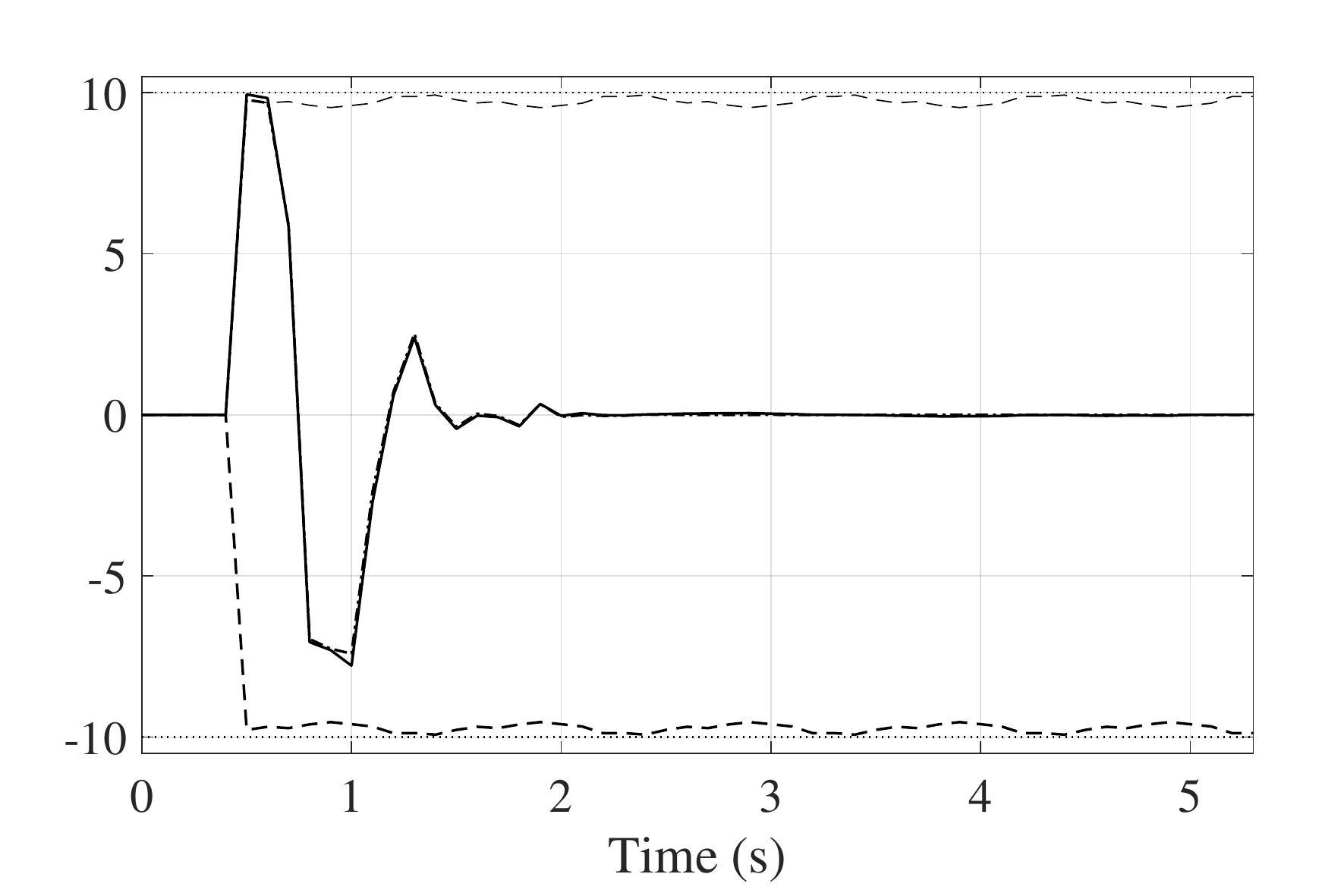}
\caption{Input variable. Dash-dotted line: $\bar{U}(k)$, solid line: $U(k)$, dashed lines: tightened constraints \eqref{constraint:Utightened}, dotted lines: absolute constraints. \label{Figure:u}}
\end{figure}

\begin{figure}[thpb]
\centering
\includegraphics[scale=0.45]{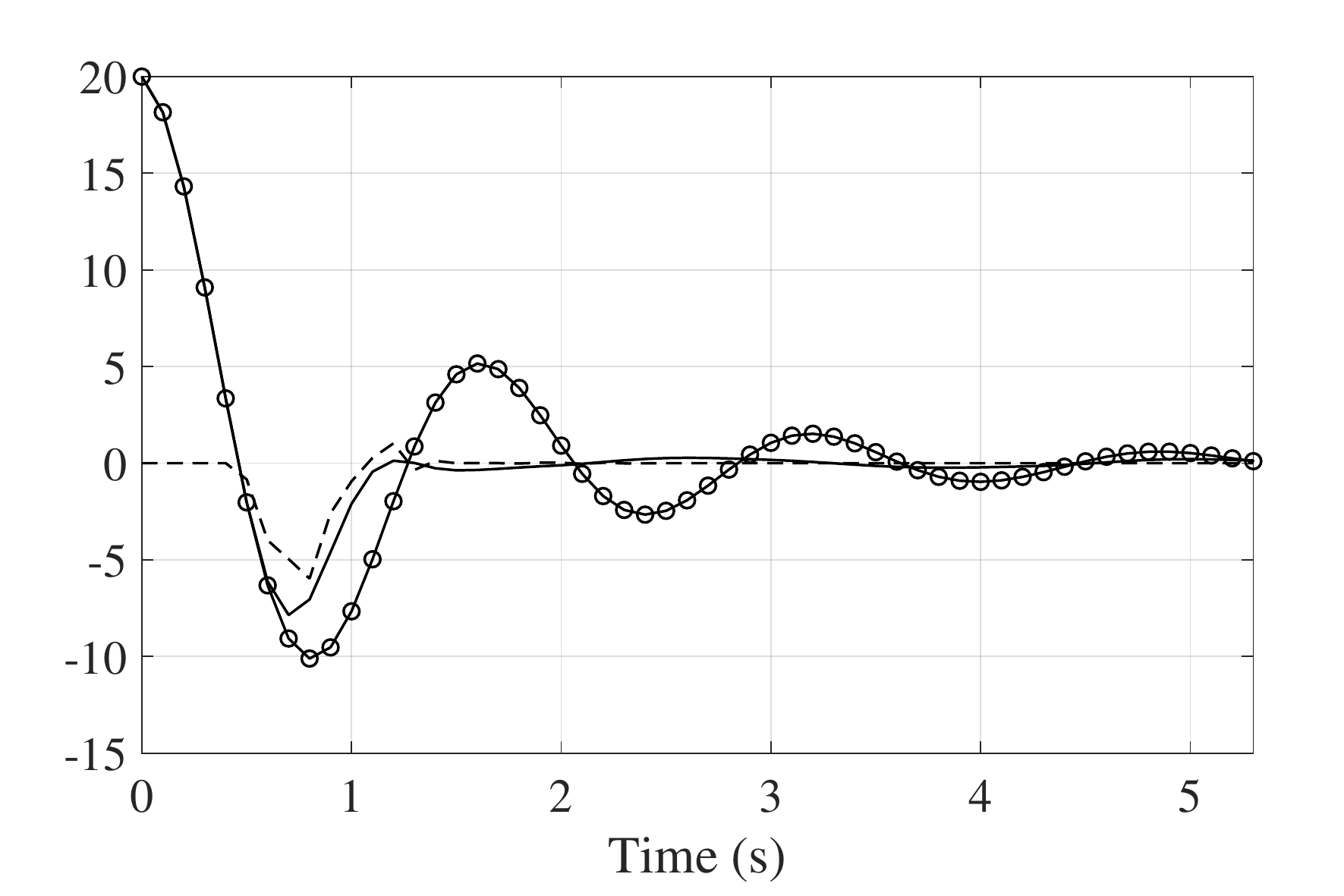}
\caption{Output variable. Solid line: $z(k)$, dashed line: $\hat{\bar{Z}}(j)$, line with circles: open loop response. \label{Figure:z}}
\end{figure}

\section*{Appendix}

\noindent \emph{Proof of Theorem \ref{thm1}}.\\

We derive a multi-step predictor by iterating a generic (simulation, i.e. $1$-step ahead) model with coefficient vector $\hat{\theta}^{(1)}$ and  focusing on the function linking prediction steps $p=1$ and $p=2$. The extension up to $\bar{p}$ is straightforward.
In the following we will use $'$ on the variables predicted with the (possibly iterated) simulation model.\\
First recall the definition of the one-step predictor regressor vector $\varphi_y^{(1)}(k)=\begin{bmatrix}y(k) &  \dots & y(k-o+1) & u(k-1) & \dots & u(k-o+1) & u(k) \end{bmatrix}^T$ and note that $\varphi_y^{(2)}(k)= \begin{bmatrix} \varphi_y^{(1)}(k) \\ u(k+1) \end{bmatrix}$.\\
Assuming to be at time k+1, to proceed 1-step ahead we would need $y(k+1)$ to compute
\begin{equation}
\hat{z}'(k+2)=\hat{\theta}^{(1)^T}\varphi_y^{(1)}(k+1)
\end{equation}
If we are at time $k$, the measured value of $y(k+1)$ is not available, hence its nominal prediction computed with the simulation model is used in its place, i.e. $\hat{z}'(k+1)=\hat{\theta}^{(1)^T}\varphi_y^{(1)}(k)$.
This results in
\begin{equation}\hat{z}'(k+2)=\hat{\theta}^{(1)^T}\varphi_y^{(1)'}(k+1)
\label{eq:1stepiter}
\end{equation}
where $\varphi_y^{(1)'}(k+1)=[\hat{z}'(k+1)\,\dots\, y(k-o+2)\,u(k-1)\,\dots \, u(k-o+2),u(k), u(k+1)]^T$, which can be expressed as a function of $\varphi_y^{(1)}(k)$ as $\varphi_y^{(1)'}(k+1)=$
\begin{equation}
=\begin{bmatrix}
0 \\ \vdots \\ 0 \\ u(k+1)
\end{bmatrix} +
\begin{bmatrix}
&\hat{\theta}^{(1)*^T}  &\\
I_{o-1} 		& 0_{o-1,1} 		& 0_{o-1,o}		\\
0_{1,o-1}		& 0 				& \begin{bmatrix} 0_{1,o-1}			& 1				\end{bmatrix} \\
0_{o-2,o-1} 	&0_{o-2,1}			& \begin{bmatrix} I_{o-2} 			& 0_{o-2,2}		\end{bmatrix}	\\
0_{1,o-1}		&0					& 0_{1,o}
\end{bmatrix}\varphi_y^{(1)}(k)
\label{eq:phik1phik}
\end{equation}
or, in shorter notation $\varphi_y^{(1)'}(k+1)=\mathcal{U}_1+S(\hat{\theta}^{(1)})\varphi_y^{(1)}(k)$, where $\mathcal{U}_1$ and $S(\hat{\theta}^{(1)})$ are are implicitly defined in \eqref{eq:phik1phik}.

By replacing \eqref{eq:phik1phik} in \eqref{eq:1stepiter} we get
\begin{equation}
\begin{array}{ll}
\hat{z}'(k+2)&=\hat{\theta}^{(1)^T} \left( \mathcal{U}_1+S(\hat{\theta}^{(1)})\varphi_y^{(1)}(k) \right)\\
&=\underbrace{\begin{bmatrix}
	\hat{\theta}^{(1)^T}S(\hat{\theta}^{(1)})  & \hat{\theta}^{(1)}_{last}
	\end{bmatrix}}_{\hat{\theta}^{(2),1}}
\begin{bmatrix}
\varphi_y^{(1)}(k) \\ u(k+1)
\end{bmatrix}\\
&=\hat{\theta}^{(2),1}\varphi_y^{(2)}(k)
\end{array}
\label{eq:iteration_1step}
\end{equation}
where $\hat{\theta}^{(1)}_{last}$ is the last element of $\hat{\theta}^{(1)}$, and $ \hat{\theta}^{(2),1}$ is introduced. Note that the entries of $\hat{\theta}^{(2),1}$ are polynomial combinations of the ones of $\hat{\theta}^{(1)}$, see \eqref{eq:iteration_1step}.
With similar manipulations, for any step $p$ the predictor obtained by iterating the simulation model with parameters $\hat{\theta}^{(1)}$ reads
\begin{equation}
\hat{z}'(k+p)=\hat{\theta}^{(p),1^T}\varphi_y^{(p)}(k)
\end{equation}
where $\hat{\theta}^{(p),1}(\hat{\theta}^{(1)}):\mathbb{R}^{2o}\rightarrow \mathbb{R}^{2o+p-1}  $ is a vector function of polynomials of degree up to $p$ of the elements of $\hat{\theta}^{(1)}$.
\medskip

Let us now focus on the worst case prediction error, with arguments similar to \eqref{eq:tau_p}:
\begin{equation}
\begin{array}{ll}
&|z(k+p)-\hat{z}'(k+p)| \leq\\
&\leq |\varphi_y^{(p)^T}(\bar{\theta}^{(p)} - \hat{\theta}^{(p),1}) | + \bar{\epsilon}_p \\
&\leq \max\limits_{\theta \in \Theta^{(p)}} \max\limits_{\varphi_y^{(p)} \in \Phi^{(p)}} |\varphi_y^{(p)^T}(\theta - \hat{\theta}^{(p),1})| + \bar{\epsilon}_p = \tau_p(\hat{\theta}^{(p),1})
\end{array}
\label{eq:theta'inTheta}
\end{equation}

where the last inequality holds thanks to the fact that $\bar{\theta}^{(p)}\subseteq \Theta^{(p)}$.
We now aim to show that $\tau_p(\hat{\theta}^{(p),1}) \geq \tau_p(\hat{\theta}^{(p)*})$.

For a given vector $\hat{\theta}^{(p),1}$, one of these two cases occur:
\begin{itemize}
	\item If $\hat{\theta}^{(p),1} \in \Theta^{(p)}$, then from \eqref{eq:theta'inTheta}, it follows
	\begin{equation*}
	\begin{split}
	\tau_p(\hat{\theta}^{(p)*})&=\min\limits_{\bar{\theta}^{(p)} \in \Theta^{(p)}}\max\limits_{\theta \in \Theta^{(p)}} \max\limits_{\varphi_y^{(p)} \in \Phi^{(p)}} |\varphi_y^{(p)^T}(\theta-\bar{\theta}^{(p)}) | + \bar{\epsilon}_p\\
	& \leq \tau_p(\hat{\theta}^{(p),1})
	\end{split}
	\end{equation*}
	hence proving the claim.
	\smallskip
	\item If $\hat{\theta}^{(p),1} \notin \Theta^{(p)}$ 
	, let us consider a generic element $\theta \in \Theta^{(p)}$ and the convex combination
	\begin{equation}
	(1-\bar{\alpha}) \theta + \bar{\alpha} \hat{\theta}^{(p),1} =\theta^{(p)''}
	\label{eq:theta_ii}
	\end{equation} 
	where
	\begin{equation*}
	\begin{array}{cl}
	\bar{\alpha}=& \max\limits_{\alpha} \alpha\\
	s.t.&\alpha \in [0,1]\\
	&(1-\alpha)\theta+\alpha \hat{\theta}^{(p),1} \in \Theta^{(p)}\\
	\end{array}
	\label{eq:theta_p'''}
	\end{equation*}
	The point $\theta^{(p)''}$, belongs to the boundary of $\Theta^{(p)}$ along the direction connecting the chosen $\theta \in \Theta^{(p)}$ to $\hat{\theta}^{(p),1} \notin \Theta^{(p)}$.
	%
	%
	Consider \eqref{eq:theta'inTheta}, omitting the dependence on time $k$ for brevity we compute
	\begin{equation}
	\begin{array}{ll}
	&|\varphi_y^{(p)^T}(\theta-\hat{\theta}^{(p),1})| =\\
	& = |\varphi_y^{(p)^T}(1-\bar{\alpha})(\theta - \hat{\theta}^{(p),1})| + |\varphi_y^{(p)^T}\bar{\alpha} (\theta - \hat{\theta}^{(p),1})|\\
	& = |\varphi_y^{(p)^T} (\theta^{(p)''}-\hat{\theta}^{(p),1})| + |\varphi_y^{(p)^T}(\theta - \theta^{(p)''})|\\
	\end{array}
	\end{equation}
	where in the last equality the term $\theta^{(p)''}$ as defined in \eqref{eq:theta_ii} has been substituted.
	The latter expression allows us to split the value $$|\varphi_y^{(p)^T}(\theta-\hat{\theta}^{(p),1})|$$ in a contribution given by a predictor $\theta$ inside $\Theta^{(p)}$, namely $|\varphi_y^{(p)^T}(\theta - \theta^{(p)''})|$, plus a contribution outside $\Theta^{(p)}$, that is $|\varphi_y^{(p)^T} (\theta^{(p)''}-\hat{\theta}^{(p),1})|$. Therefore we can write $\tau_p(\hat{\theta}^{(p),1})=$
	\scalebox{0.85}{\parbox{\linewidth}{
			\begin{align}
			&=\max\limits_{\theta \in \Theta^{(p)}} \max\limits_{\varphi_y^{(p)} \in \Phi^{(p)}} |\varphi_y^{(p)^T} (\theta- \hat{\theta}^{(p),1})| + \bar{\epsilon}_p \notag\\
			&\geq \min\limits_{\theta^{(p)''} \in \Theta^{(p)}}\max\limits_{\theta \in \Theta^{(p)}} \max\limits_{\varphi_y^{(p)} \in \Phi^{(p)}} |\varphi_y^{(p)^T}(\theta - \theta^{(p)''})| + \bar{\epsilon}_p =\tau_p(\hat{\theta}^{(p)*})
			\label{eq:min_tau}
			\end{align}
	}}
	that completes the proof. Note that in \eqref{eq:min_tau}, for a given $\hat{\theta}^{(p),1}$,  $\theta^{(p)''}$ depends only on $\theta$. \hfill $\square$
\end{itemize}

\medskip

%
\vspace{2 mm}
\noindent
We first prove recursive feasibility by induction. Assume that, at $k=j\bar{p}$, a solution to the optimization problem \eqref{eq:optprb} exists and
denote it with $\bar{X}^*(j|j)$, $\bar{\mathbf{U}}^*(j|j)$. All constraints \eqref{eq:constraints} are therefore verified by the nominal state trajectories associated with the optimal solution $\bar{X}^*(j+i|j)$ and $\mathbf{\bar{U}}^*(j|j)$:
\begin{subequations}
	\label{eq:constraints_k}
	\begin{align}
	X(j)-\bar{X}^*(j|j) &\in \mathbb{E} \label{constr:state} \\
	\bar{U}^*(j+i|j)&\in \mathbb{\mathbf{U}}\ominus K\mathbb{E}\label{constr:in}\\
	\hat{\bar{Z}}^*(j+i|j)&=\bar{C}\bar{X}^*(j+i|j) + \bar{D}\bar{U}^*(j+i|j)  \nonumber \\
	&\in \hat{\mathbb{\mathbf{Z}}}\ominus (\bar{C}+\bar{D}K)\mathbb{E}\label{constr:out}\\
	\bar{X}^*(j+N_p|j)&\in\mathbb{X}_F\label{constr:fin}
	\end{align}
\end{subequations}
with $i=0,\dots,N_p-1$. Finally, the input $U(j)$ is defined according to \eqref{eq:controllaw} with $\bar{U}(j)=\bar{U}^*(j|j)$ and $\bar{X}(j)=\bar{X}^*(j|j)$. Let us call this quantity $U^*(j|j)$. At yime $k=(j+1)\bar{p}$, 
$$X(j+1)=\bar{A}X(j)+\bar{B}U^*(j|j)+\bar{M}W(j)$$
We can show that a feasible, although possibly suboptimal, solution to \eqref{eq:optprb} can be defined, i.e., as $\bar{X}^*(j+1|j),\mathbf{\tilde{\bar{U}}}^*(j+1|j)=\left( \bar{U}^*(j+1|j), \dots, \bar{U}^*(j+N_p-1|j), K\bar{X}^*(j+N_p|j) \right) $.

First of all, we have
\begin{align*}
&X(j+1)-\bar{X}^*(j+1|j)=\\
&(\bar{A}+\bar{B}K)(X(j)-\bar{X}^*(j|j)) + \bar{M}W(j) \in \mathbb{E}
\end{align*}
in view of \eqref{constr:state} and of the fact that $\mathbb{E}$ is RPI.\\
Moreover, $\bar{U}^*(j+i|j) \in \mathbb{\mathbf{U}} \ominus K\mathbb{E}$ in view of \eqref{constr:in}, for all $i=1,\dots,N_p-1$, and $K\bar{X}^*(j+N_p|j) \in K\mathbb{X}_F \subseteq \mathbb{\mathbf{U}} \ominus K\mathbb{E}$ in view of \eqref{constr:fin} and of \eqref{constraint:Utightened}. 
In addition, $$(\bar{C}+\bar{D}K)\bar{X}^*(j+i|j) \in \hat{\mathbf{Z}} \ominus (\bar{C}+\bar{D}K)\mathbb{E}$$ for all $i=1,\dots,N_p-1$ in view of \eqref{constr:out} and $$(\bar{C}+\bar{D}K)\bar{X}^*(j+N_p|j) \in (\bar{C}+\bar{D}K)\mathbb{X}_F \subseteq \hat{\mathbf{Z}} \ominus (\bar{C}+\bar{D}K)\mathbb{E}$$ in view of \eqref{constr:fin} and of \eqref{constraint:Utightened}.

Finally, it holds that $$\bar{X}^*(j+N_p+1|j)=(\bar{A}+\bar{B}K)\bar{X}^*(j+N_p|j) \in \mathbb{X}_F$$ in view of \eqref{constr:fin} and of the positive invariance of $\mathbb{X}_F$.
Since feasibility holds by assumption at time $j\bar{p}, j=0$ then, by induction, it is guaranteed also for all $j>0$.

\vspace{2 mm}
\noindent
\emph{Constraint satisfaction.}\\
Constraint satisfaction is now proven. In view of the feasibility of the problem \eqref{eq:optprb} at any time instant $j\bar{p}, j\geq0$, it results that constraints \eqref{eq:constraints_k} are verified. Therefore, from \eqref{eq:controllaw}, \eqref{constr:state}, and \eqref{constr:in}, $$U^*(j|j)=\bar{U}^*(j|j)+K(X(j)-\bar{X}^*(j|j))\in ({\mathbb{\mathbf{U}}}\ominus K\mathbb{E})\oplus K\mathbb{E} \subseteq \mathbb{\mathbf{U}}.$$
Then, by definition of $U(j)$ and the set $\mathbb{\mathbf{U}}$, input constraints satisfaction in \eqref{eq:constraints_YU} follows.
Also, from \eqref{constr:state} and \eqref{constr:out},
\begin{align*}
\hat{Z}(j|j)&= \bar{C}\bar{X}^*(j|j)+\bar{D}U^*(j|j)+(\bar{C} + \bar{D}K)(X(j)-\bar{X}^*(j|j))\\
&\in (\hat{\mathbb{\mathbf{Z}}}\ominus (\bar{C}+\bar{D}K)\mathbb{E})\oplus  (\bar{C}+\bar{D}K)\mathbb{E}\subseteq \hat{\mathbb{\mathbf{Z}}}.
\end{align*}
Indeed, if $\hat{Z}(j|j)\in \hat{\mathbb{\mathbf{Z}}}$, then $Z(j|j) \in \mathbb{\mathbf{Z}}$ follows, and eventually by definition of the latter ones, \eqref{eq:constraints_YU} is satisfied.

\vspace{2mm}
\noindent
Convergence is proven with standard arguments (see \cite{rawlings2009model}) by showing that the optimal cost function is decreasing in time, i.e.
\begin{equation}
J^*(j+1|j+1)-J^*(j|j) \leq - \left(\|\hat{\bar{Z}}^*(j|j)\|_Q^2+\|\hat{\bar{U}}^*(j|j)\|_R^2 \right)
\label{J:decrease}
\end{equation}

Since $J(j)$ is positive by definition, and decreasing in view of \eqref{J:decrease}, then $\hat{\bar{Z}}^*(j|j) \text{ and } \bar{U}^*(j|j) \to 0$ as $j \to +\infty$.
Also, recalling \eqref{constr:state}, it holds
\begin{align*}
\bar{C}(X(j)-\bar{X}^*(j|j))+\bar{D}(U^*(j|j)-\bar{U}^*(j|j))&=\hat{Z}(j|j)-\hat{\bar{Z}}^*(j|j) \\
&\in (\bar{C}+\bar{D}K)\mathbb{E},
\end{align*} for all $j$ which concludes the proof. \hfill $\square$
\bibliography{mybiblio}

\begin{thebibliography}{1}

\bibitem{LaFa2020}
M.~{Lauricella} and L.~{Fagiano}.
\newblock Set membership identification of linear systems with guaranteed
  simulation accuracy.
\newblock {\em IEEE Transactions on Automatic Control}, 65(12):5189--5204,
  2020.

\bibitem{mayne2005robust}
D.Q. Mayne, M.M. Seron, and S.V. Rakovi{\'c}.
\newblock Robust model predictive control of constrained linear systems with
  bounded disturbances.
\newblock {\em Automatica}, 41(2):219--224, 2005.

\bibitem{RakovicKouramas2007}
S.~V. Rakovic and K.~I. Kouramas.
\newblock Invariant approximations of the minimal robust positively invariant
  set via finite time aumann integrals.
\newblock In {\em 46th IEEE Conference on Decision and Control}, pages
  194--199, 2007.

\bibitem{rawlings2009model}
J.B. Rawlings and D.Q. Mayne.
\newblock Model predictive control: Theory and design.
\newblock 2009.

\bibitem{scattolini1994multirate}
R.~Scattolini and N~Schiavoni.
\newblock A multirate model based predictive controller.
\newblock In {\em Proceedings of 1994 33rd IEEE Conference on Decision and
  Control}, volume~1, pages 243--248. IEEE, 1994.

\bibitem{TFFS_aut}
E.~Terzi, L.~Fagiano, M.~Farina, and R.~Scattolini.
\newblock Learning-based predictive control for linear systems: A unitary
  approach.
\newblock {\em Automatica}, 108:108473, 2019.

\bibitem{wang2020robust}
J.~Wang, Y.~Song, and G.~Wei.
\newblock Robust model predictive control for multirate systems with model
  uncertainties and circular scheduling.
\newblock {\em International Journal of Robust and Nonlinear Control},
  30(18):8206--8227, 2020.

\bibitem{wang2006multirate}
X.~Wang, B.~Huang, and T.~Chen.
\newblock Multirate minimum variance control design and control performance
  assessment: A data-driven subspace approach.
\newblock {\em IEEE Transactions on Control Systems Technology}, 15(1):65--74,
  2006.

\end{thebibliography}
\bibliographystyle{plain}
\end{document}